
\documentclass[conference]{IEEEtran}
\usepackage[letterpaper, left=0.91in, right=0.91in, bottom=0.91in, top=0.75in]{geometry}
\pdfminorversion=4
\newlength \figwidth
\setlength \figwidth {0.92\columnwidth}


\usepackage{cite}
\ifCLASSINFOpdf
  \usepackage[pdftex]{graphicx}
	\usepackage{epstopdf}
  \graphicspath{{../Figures/}}
  \DeclareGraphicsExtensions{.eps}
\else
  \usepackage[dvips,draft]{graphicx}
\fi
\usepackage[cmex10]{amsmath}
\interdisplaylinepenalty=2500
\hyphenation{op-tical net-works semi-conduc-tor}
\usepackage{amssymb}
\usepackage{pifont}
\usepackage{amsthm}
\usepackage{url}
\usepackage{dsfont}
\usepackage{tabulary}
\usepackage{multirow}
\usepackage{color}

\usepackage{color}
\usepackage{times}
\usepackage{verbatim}
\usepackage{floatflt}
\usepackage{enumerate}
\usepackage{array}
\usepackage{multicol,afterpage,wrapfig} 
\usepackage{tikz}
\usepackage[noend]{algpseudocode}
\makeatletter
\def\BState{\State\hskip-\ALG@thistlm}
\makeatother
\usepackage{amsmath,amssymb,eucal}
\usepackage[utf8]{inputenc}
\usepackage{epsfig}
\usepackage{exscale}
\usepackage{latexsym}
\usepackage{verbatim}
\usepackage{amsfonts}
\usepackage{multirow}
\usepackage{cite}
\usepackage{balance}
\usepackage{color}
\usepackage{transparent}
\usepackage{import}
\usepackage{mathtools}
\usepackage{tikz}
\usetikzlibrary{automata,arrows,positioning,calc}
\usepackage{bbm}
\usepackage[printonlyused,withpage]{acronym}
\usepackage{tabu}
\usepackage{longtable}
\usepackage{balance}
\usepackage{footnote}
\usepackage{tablefootnote}
\usepackage{enumitem}

\usepackage{adjustbox}
\usepackage{multirow}
\usepackage{pifont}
%
%



\def\BibTeX{{\rm B\kern-.05em{\sc i\kern-.025em b}\kern-.08em
    T\kern-.1667em\lower.7ex\hbox{E}\kern-.125emX}}

\newcommand*\xbar[1]{%
  \hbox{%
    \vbox{%
      \hrule height 0.5pt 
      \kern0.36ex
      \hbox{%
        \kern-0.12em
        \ensuremath{#1}%
        \kern-0.12em
      }%
    }%
  }%
}

\setlength\unitlength{1mm}


\newfont{\bbb}{msbm10 scaled 500}

\newfont{\bb}{msbm10 scaled 1100}




\newcommand{\Rc}{{\cal R}}






\newcommand{\bx}{{\text{b}}}

\newcommand{\hx}{{\text{h}}}

\newcommand{\nx}{{\text{n}}}
\newcommand{\ox}{{\text{o}}}
\newcommand{\px}{{\text{p}}}

\newcommand{\sx}{{\text{s}}}


\renewcommand{\d}{\mathrm{d}}

\def\e{\text{e}}




\newcommand{\executeiffilenewer}[3]{%
\ifnum\pdfstrcmp{\pdffilemoddate{#1}}%
{\pdffilemoddate{#2}}>0%
{\immediate\write18{#3}}\fi%
}
\newcommand{%
\executeiffilenewer{.svg}{.pdf}%
{inkscape -z -D --file=.svg --export-pdf=.pdf --export-latex}%
\input{.pdf_tex}%

}[1]{%
\executeiffilenewer{#1.svg}{#1.pdf}%
{inkscape -z -D --file=#1.svg --export-pdf=#1.pdf --export-latex}%
\input{#1.pdf_tex}%

}

\usetikzlibrary{arrows,calc}
\allowdisplaybreaks

\IEEEoverridecommandlockouts
\begin{document}
\pagenumbering{gobble}

\def \pcov{\mathcal{P}_\mathrm{cov}}
\def \pcovl{\mathcal{P}_\mathrm{cov}^\mathrm{L}}
\def \pcovn{\mathcal{P}_\mathrm{cov}^\mathrm{N}}
\def \pcovu{\mathcal{P}_\mathrm{cov}^\upsilon}
\def \pcovuav{\mathcal{P}_\mathrm{u}}
\def \pcovc{\mathcal{P}_\mathrm{g}}
\def \sinr{\mathsf{SINR}}
\def \du{d_\mathrm{u}}
\def \dc{d_\mathrm{g}}
\def \db{d_\mathrm{b}}
\def \ru{r_\mathrm{u}}
\def \Ru{R_\mathrm{u}}
\def \rc{r_\mathrm{g}}
\def \rb{r_\mathrm{b}}
\def \al{\alpha_\mathrm{L}}
\def \an{\alpha_\mathrm{N}}
\def \au{\alpha_\upsilon}
\def \bl{\beta_\mathrm{L}}
\def \bn{\beta_\mathrm{N}}
\def \bu{\beta_\upsilon}
\def \bx{\beta_\xi}
\def \pu{\mathrm{P_u}}
\def \pum{\mathrm{P_u^M}}
\def \pcm{\mathrm{P_g^M}}
\def \pc{\mathrm{P_g}}
\def \pcl{\mathrm{P_g^L}}
\def \pcn{\mathrm{P_g^N}}
\def \hu{\mathrm{h_u}}
\def \htx{h_\mathrm{t}}
\def \hrx{h_\mathrm{r}}
\def \hb{\mathrm{h_b}}
\def \hc{\mathrm{h_g}}
\def \pl{\mathcal{P}_\mathrm{L}}
\def \pn{\mathcal{P}_\mathrm{N}}
\def \pup{\mathcal{P}_\upsilon}
\def \px{\mathcal{P}_\xi}
\def \pluu{\mathcal{P}_\mathrm{L}^\mathrm{uu}}
\def \ml{m_\mathrm{L}}
\def \mn{m_\mathrm{N}}
\def \m{m_\upsilon}
\def \muu{\mathrm{m_{uu}}}
\def \muul{\mathrm{m_{uu}^L}}
\def \muun{\mathrm{m_{uu}^N}}
\def \mcb{\mathrm{m_{gb}}}
\def \mcbl{\mathrm{m_{gb}^L}}
\def \mcbn{\mathrm{m_{gb}^N}}
\def \mub{\mathrm{m_{ub}}}
\def \mubl{\mathrm{m_{ub}^L}}
\def \mubn{\mathrm{m_{ub}^N}}
\def \mcul{\mathrm{m_{gu}^L}}
\def \mcun{\mathrm{m_{gu}^N}}
\def \nl{n_\mathrm{L}}
\def \nn{n_\mathrm{N}}
\def \nx{n_\xi}
\def \n{n_\upsilon}
\def \sl{\psi_\mathrm{L}}
\def \sn{\psi_\mathrm{N}}
\def \su{\psi_\upsilon}
\def \sx{\psi_\xi}
\def \ol{\omega_\mathrm{L}}
\def \on{\omega_\mathrm{N}}
\def \ou{\omega_\upsilon}
\def \ox{\omega_\xi}
\def \il{I_\mathrm{L}}
\def \iN{I_\mathrm{N}}
\def \yl{y_\mathrm{L}}
\def \yn{y_\mathrm{N}}
\def \yu{y_\upsilon}
\def \lapi{\mathcal{L}_I}
\def \lapil{\mathcal{L}_{I_\mathrm{L}}}
\def \lapin{\mathcal{L}_{I_\mathrm{N}}}
\def \lapix{\mathcal{L}_{I_\xi}}
\def \lapic{\mathcal{L}_{I_\mathrm{g}}}
\def \lapiu{\mathcal{L}_{I_\mathrm{u}}}
\def \Ul{\Upsilon_\mathrm{L}}
\def \Un{\Upsilon_\mathrm{N}}
\def \Ux{\Upsilon_\xi}
\def \Uxu{\Upsilon_\xi^{\,\upsilon}}
\def \Unu{\Upsilon_\mathrm{N}^\upsilon}
\def \frl{f_{R_\mathrm{b}}^\mathrm{L}}
\def \frn{f_{R_\mathrm{b}}^\mathrm{N}}
\def \frx{f_{R_\mathrm{b}}^{\,\xi}}
\def \laml{\lambda_\mathrm{L}}
\def \lamn{\lambda_\mathrm{N}}

\def \t{\mathrm{T}}
\def \i{I}
\def \ic{I_\mathrm{g}}
\def \icl{I_\mathrm{g}^\mathrm{L}}
\def \icn{I_\mathrm{g}^\mathrm{N}}
\def \iu{I_\mathrm{u}}
\def \iul{I_\mathrm{u}^\mathrm{L}}
\def \iun{I_\mathrm{u}^\mathrm{N}}
\def \ful{\Phi_\mathrm{u}^\mathrm{L}}
\def \fun{\Phi_\mathrm{u}^\mathrm{N}}
\def \fu{\Phi_\mathrm{u}}
\def \fcl{\Phi_\mathrm{g}^\mathrm{L}}
\def \fcN{\Phi_\mathrm{g}^\mathrm{N}}
\def \fc{\Phi_\mathrm{g}}

\def \gb{\mathrm{G_b}}
\def \gu{\mathrm{G_{u}}}
\def \gb{\mathrm{G_b}}
\def \tt{\theta_\mathrm{t}}
\def \tb{\theta_\mathrm{b}}

\def \lamci{\hat{\lambda}_\mathrm{g}}
\def \lamb{\lambda_\mathrm{b}}
\def \lamu{\lambda_\mathrm{u}}
\def \lamuhat{\hat{\lambda}_\mathrm{u}}
\def \lamg{\lambda_\mathrm{g}}
\def \lamul{\lambda_\mathrm{u}^\mathrm{L}}
\def \lamun{\lambda_\mathrm{u}^\mathrm{N}}
\def \lamcl{\lambda_\mathrm{g}^\mathrm{L}}

\def \iccl{I_\mathrm{gg}^\mathrm{L}}
\def \iccn{I_\mathrm{gg}^\mathrm{N}}
\def \icc{I_\mathrm{gg}}
\def \icu{I_\mathrm{gu}}
\def \iuc{I_\mathrm{ug}}
\def \iuu{I_\mathrm{uu}}
\def \iuul{I_\mathrm{uu}^\mathrm{L}}
\def \iuun{I_\mathrm{uu}^\mathrm{N}}
\def \icul{I_\mathrm{gu}^\mathrm{L}}
\def \icun{I_\mathrm{gu}^\mathrm{N}}
\def \iucl{I_\mathrm{ug}^\mathrm{L}}
\def \iucn{I_\mathrm{ug}^\mathrm{N}}
\def \lapiccl{\mathcal{L}_{\iccl}}
\def \lapiccn{\mathcal{L}_{\iccn}}
\def \lapicul{\mathcal{L}_{\icul}}
\def \lapicun{\mathcal{L}_{\icun}}
\def \lapicu{\mathcal{L}_{\icu}}
\def \lapiuc{\mathcal{L}_{\iuc}}
\def \lapicc{\mathcal{L}_{\icc}}
\def \lapiucl{\mathcal{L}_{\iucl}}
\def \lapiucn{\mathcal{L}_{\iucn}}
\def \lapiuu{\mathcal{L}_{\iuu}}
\def \lapiuul{\mathcal{L}_{\iuul}}
\def \lapiuun{\mathcal{L}_{\iuun}}
\def \pcb{\mathsf{p}_\mathrm{gb}}
\def \pub{\mathsf{p}_\mathrm{ub}}
\def \pcu{\mathsf{p}_\mathrm{gu}}
\def \pruu{\mathsf{p}_\mathrm{uu}}
\def \prcb{\mathsf{p}_\mathrm{gb}}
\def \fici{\hat{\Phi}_c}
\def \tl{\tau_\mathrm{L}}
\def \tn{\tau_\mathrm{N}}
\def \acbl{\alpha_{\mathrm{gb}}^\mathrm{L}}
\def \acbn{\alpha_{\mathrm{gb}}^\mathrm{N}}
\def \acul{\alpha_{\mathrm{gu}}^\mathrm{L}}
\def \acun{\alpha_{\mathrm{gu}}^\mathrm{N}}
\def \acb{\alpha_{\mathrm{gb}}}
\def \alcb{\alpha_{\mathrm{gb}}^\mathrm{L}}
\def \ancb{\alpha_{\mathrm{gb}}^\mathrm{N}}
\def \aubl{\alpha_{\mathrm{ub}}^\mathrm{L}}
\def \aubn{\alpha_{\mathrm{ub}}^\mathrm{N}}
\def \auu{\alpha_{\mathrm{uu}}}
\def \auul{\alpha_{\mathrm{uu}}^\mathrm{L}}
\def \auun{\alpha_{\mathrm{uu}}^\mathrm{N}}
\def \zl{\zeta_\mathrm{L}}
\def \zn{\zeta_\mathrm{N}}
\def \ec{\epsilon_\mathrm{g}}
\def \eu{\epsilon_\mathrm{u}}
\def \subl{\psi_\mathrm{ub}^\mathrm{L}}
\def \scbl{\psi_\mathrm{gb}^\mathrm{L}}
\def \scbn{\psi_\mathrm{gb}^\mathrm{N}}
\def \scul{\psi_\mathrm{gu}^\mathrm{L}}
\def \suu{\psi_\mathrm{uu}}
\def \suul{\psi_\mathrm{uu}^\mathrm{L}}
\def \suun{\psi_\mathrm{uu}^\mathrm{N}}
\def \zubl{\zeta_\mathrm{ub}^\mathrm{L}}
\def \zubn{\zeta_\mathrm{ub}^\mathrm{N}}
\def \zcul{\zeta_\mathrm{cu}^\mathrm{L}}
\def \zuu{\zeta_\mathrm{uu}}
\def \zuul{\zeta_\mathrm{uu}^\mathrm{L}}
\def \zuun{\zeta_\mathrm{uu}^\mathrm{N}}
\def \zcb{\zeta_\mathrm{gb}}
\def \zcbl{\zeta_\mathrm{gb}^\mathrm{L}}
\def \zcbn{\zeta_\mathrm{gb}^\mathrm{N}}
\def \pur{\rho_\mathrm{u}}
\def \pcr{\rho_\mathrm{g}}
\def \guu{\mathrm{g_{uu}}}
\def \gcu{\mathrm{g_{gu}}}
\def \gub{\mathrm{g_{ub}}}
\def \gubi{\mathrm{g_{ub}^{(i)}}}
\def \gcb{\mathrm{g_{gb}}}
\def \scb{\psi_\mathrm{gb}}
\def \hub{\mathrm{h_{ub}}}
\def \hcb{\mathrm{h_{gb}}}
\def \hcu{\mathrm{h_{gu}}}
\def \dub{d_\mathrm{{ub}}}
\def \dcb{d_\mathrm{{gb}}}
\def \dcu{d_\mathrm{{gu}}}
\def \publ{\Psi_\mathrm{ub}^\mathrm{L}}
\def \pcul{\Psi_\mathrm{gu}^\mathrm{L}}
\def \pcb{\Psi_\mathrm{gb}}
\def \pcbl{\Psi_\mathrm{gb}^\mathrm{L}}
\def \pcbn{\Psi_\mathrm{gb}^\mathrm{N}}
\def \puu{\Psi_\mathrm{uu}}
\def \puul{\Psi_\mathrm{uu}^\mathrm{L}}
\def \puun{\Psi_\mathrm{uu}^\mathrm{N}}
\def \buu{\beta_\mathrm{uu}}
\def \bcb{\beta_\mathrm{gb}}
\def \bubl{\beta_\mathrm{ub}^\mathrm{L}}
\def \Bx{\mathrm{B_x}}
\def \Cx{\mathcal{C}_\mathrm{x}}
\def \Rx{\mathcal{R}_\mathrm{x}}
 
\def \hb{\mathrm{h}_{\mathrm{b}}}
\def \hu{\mathrm{h}_{\mathrm{u}}}
\def \hg{\mathrm{h}_{\mathrm{g}}}
\def \hx{\mathrm{h}_{\mathrm{x}}}
\def \hy{\mathrm{h}_{\mathrm{y}}}
\def \hxy{\mathrm{h}_{\mathrm{xy}}}
\def \x{\mathrm{x}}
\def \y{\mathrm{y}}
\def \zetaxy{\zeta_{\mathrm{xy}}}
\def \tauxy{\tau_{\mathrm{xy}}}
\def \tauoxy{\tau_{0,\mathrm{xy}}}
\def \gxy{\mathrm{g_{xy}}}
\def \psixy{\psi_{\mathrm{xy}}}
\def \psixyL{\psixy^{\mathrm{L}}}
\def \psixyN{\psixy^{\mathrm{N}}}
\def \L{\mathrm{L}}
\def \N{\mathrm{N}}
\def \a{\mathrm{a}}
\def \x{\mathrm{x}}
\def \y{\mathrm{y}}
\def \dxy{d_\mathrm{xy}}
\def \rxy{r_\mathrm{xy}}
\def \alphaxy{\alpha_\mathrm{xy}}
\def \alphaxyL{\alphaxy^\mathrm{L}}
\def \alphaxyN{\alphaxy^\mathrm{N}}
\def \Px{P_{\mathrm{x}}}
\def \Pxmax{P_{\mathrm{x}}^{\textrm{max}}}
\def \rhox{\rho_{\mathrm{x}}}
\def \epsx{\epsilon_{\mathrm{x}}}
\def \mxy{\mathrm{m}_{\mathrm{xy}}}
\def \mxyL{\mxy^{\mathrm{L}}}
\def \mxyN{\mxy^{\mathrm{N}}}
\def \Ixy{I_{\mathrm{xy}}}
\def \Phib{\Phi_{\mathrm{b}}}
\def \Phig{\Phi_{\mathrm{g}}}
\def \Phiu{\Phi_{\mathrm{u}}}
\def \Phihatg{\hat{\Phi}_{\mathrm{g}}}
\def \Phihatgb{\hat{\Phi}_{\mathrm{gb}}}
\def \Phihatgu{\hat{\Phi}_{\mathrm{gu}}}


\def \pcovuav{\mathcal{C}_\mathrm{u}}
\def \pcovc{\mathcal{C}_\mathrm{g}}
\def \rMuu{\mathrm{r_M}}
\def \ru{\mathrm{r_u}}
\def \Ru{R_\mathrm{u}}
\def \u{\mathrm{u}}
\def \c{\mathrm{g}}
\def \q{\mathrm{q}}
\def \r{\mathrm{r}}
\def \m{\mathrm{m}}
\def \prxy{\mathsf{p}_\mathrm{xy}}
\def \prxyxi{\mathsf{p}_\mathrm{xy}^\xi}
\def \pl{\mathcal{P}^\mathrm{L}}
\def \pn{\mathcal{P}^\mathrm{N}}
\def \rc{\mathrm{r_g}}
\def \d{\mathrm{d}}
\def \L{\mathrm{L}}
\def \N{\mathrm{N}}
\def \pu{P_\mathrm{u}}
\def \pumax{\mathrm{P_u^{max}}}
\def \lapixyxi{\mathcal{L}_{I_\mathrm{xy}^\xi}}
\def \px{P_\mathrm{x}}
\def \sy{\mathrm{s_y}}
\def \sixy{\psi_\mathrm{xy}}
\def \zetxy{\zeta_\mathrm{xy}}
\def \prxi{\mathcal{P}^\xi}
\def \s{\mathrm{s}}
\def \axy{\alpha_\mathrm{xy}}
\def \bxy{\beta_\mathrm{xy}}
\def \hxy{\mathrm{h_{xy}}}
\def \Rc{R_\mathrm{g}}
\def \pc{P_\mathrm{g}}
\def \lamc{\lambda_\c}
\def \D{\mathrm{D}}
\def \sigmau{\sigma_\mathrm{u}}
\def \g{\mathrm{g}}


\newtheorem{Theorem}{\bf Theorem}
\newtheorem{Corollary}{\bf Corollary}
\newtheorem{Remark}{\bf Remark}
\newtheorem{Lemma}{\bf Lemma}
\newtheorem{Proposition}{\bf Proposition}
\newtheorem{Assumption}{\bf Assumption}
\newtheorem{Approximation}{\bf Approximation}
\newtheorem{Definition}{\bf Definition}

\title{Spectrum Sharing Strategies for\\UAV-to-UAV Cellular Communications}
\author{\IEEEauthorblockN{M.~Mahdi~Azari$^{\star}$, Giovanni Geraci$^{\diamond}$, Adrian Garcia-Rodriguez$^{\dag}$, and Sofie~Pollin$^{\ast}$}\\ \vspace{-0.3cm}
\normalsize\IEEEauthorblockA{\emph{$^{\star}$CTTC, Spain \enspace $^{\diamond}$Universitat Pompeu Fabra, Spain \enspace $^{\dag}$Nokia Bell Labs, Ireland
\enspace $^{\ast}$KU Leuven, Belgium}}
\thanks{The work of M.~M.~Azari was partly supported by the Catalan government under grant 2017 SGR1479. The work of G.~Geraci was partly supported by MINECO under Project RTI2018-101040-A-I00 and by the Postdoctoral Junior Leader Fellowship Programme from ``la Caixa" Banking Foundation.}
}
\maketitle
\thispagestyle{empty}
\begin{abstract}
In this article, we consider a cellular network deployment where UAV-to-UAV (U2U) transmit-receive pairs coexist with the uplink (UL) of cellular ground users (GUEs). Our analysis focuses on comparing two spectrum sharing mechanisms: i) overlay,  where the available time-frequency resources are split into orthogonal portions for U2U and GUE communications, and ii) underlay, where the same resources may be accessed by both link types, resulting in mutual interference.
We evaluate the coverage probability and rate of all links and their interplay to identify the best spectrum sharing mechanism. Among other things, our results demonstrate that, in scenarios with a large number of UAV pairs, adopting overlay spectrum sharing seems the most suitable approach for maintaining a minimum guaranteed rate for UAVs and a high GUE UL performance. We also find that increasing the density of U2U links degrades their rates in the overlay---where UAVs only receive interference from other UAVs---, but not significantly so in the underlay---where the effect of GUE-generated interference is dominant.
\end{abstract}
\IEEEpeerreviewmaketitle
\section{Introduction}
\label{sec:Intro}

From a social perspective, enabling cellular-connected unmanned aerial vehicles (UAVs) is expected to bring important benefits in terms of logistics automation, facilitating search-and-rescue missions, and even coverage and capacity enhancements through mobile small cells \cite{geraci2019preparing,azari2017ultra,zeng2019accessing,MozSaaBen2018,vinogradov2019tutorial}. From a business point of view, offering cellular coverage to aerial users could translate into new revenue opportunities for mobile network operators \cite{Ericsson:18,fotouhi2018survey}. 

While present-day networks should be able to support cellular-connected UAVs up to a certain extent \cite{azari2019cellular,azari2017coexistence,AzaRosPol2017,NguAmoWig2018,MeiWuZhang2018,LopDinLi2018GC,amer2019toward}, 5G-and-beyond hardware and software upgrades may be required by both mobile operators and UAV manufacturers to accommodate for many connected UAVs \cite{GarGerLop2018,GerGarGal2018,DanGarGerICC2019}. For some use-cases---including collision avoidance and autonomous UAV swarm operations---it could be desirable to bypass the ground network infrastructure and enable direct UAV-to-UAV (U2U) communications \cite{AzaGerGar19,ZhaZhaDi19}.

In this paper, we consider a cellular network deployment where UAV transmit-receive pairs share the same spectrum with the uplink (UL) of cellular ground users (GUEs). To address the interference issues originally identified in \cite{AzaGerGar19}---where GUE UL and U2U communications fully reused the entire set of time-frequency physical resource blocks (PRBs)---, we examine two strategies for spectrum sharing, namely \emph{overlay} and \emph{underlay}. In the overlay, the available PRBs are split into two orthogonal portions, respectively reserved for each link type. In the underlay, UAVs are allowed to access a fraction of the PRBs available for the GUE UL, resulting in mutual interference.


For the first time, we compare the performance of U2U links and GUE UL with both spectrum sharing mechanisms, and evaluate the impact that the number of PRBs accessed by each link type has on the coexistence of aerial and ground communications. To do so, we adopt a realistic channel model and system setup, we propose and validate new tight approximations, and we obtain compact analytical expressions through stochastic geometry tools.


The key conclusions of our analysis are as follows:
\begin{itemize}[leftmargin=*]
\item In an urban scenario, implementing an overlay spectrum sharing approach is the best option in order to maintain a high GUE UL performance while guaranteeing a minimum rate of 100 kbps to the majority of U2U pairs.
\item In the overlay, since UAVs are only interfered by other UAVs, higher UAV densities result in lower U2U rates. On the other hand, in the underlay, where GUE-to-UAV interference is dominant, the rate degradation at UAVs caused by increasing their density is limited.
\item In the underlay, increasing the number of PRBs utilized by UAV pairs causes a sharp performance degradation for GUEs, unless both the UAV density and the UAV transmission power are limited.
\end{itemize}

\section{System Model}
\label{sec:System_Model}

\begin{figure*}[!t]
\centering
\includegraphics[width=1.75\columnwidth]{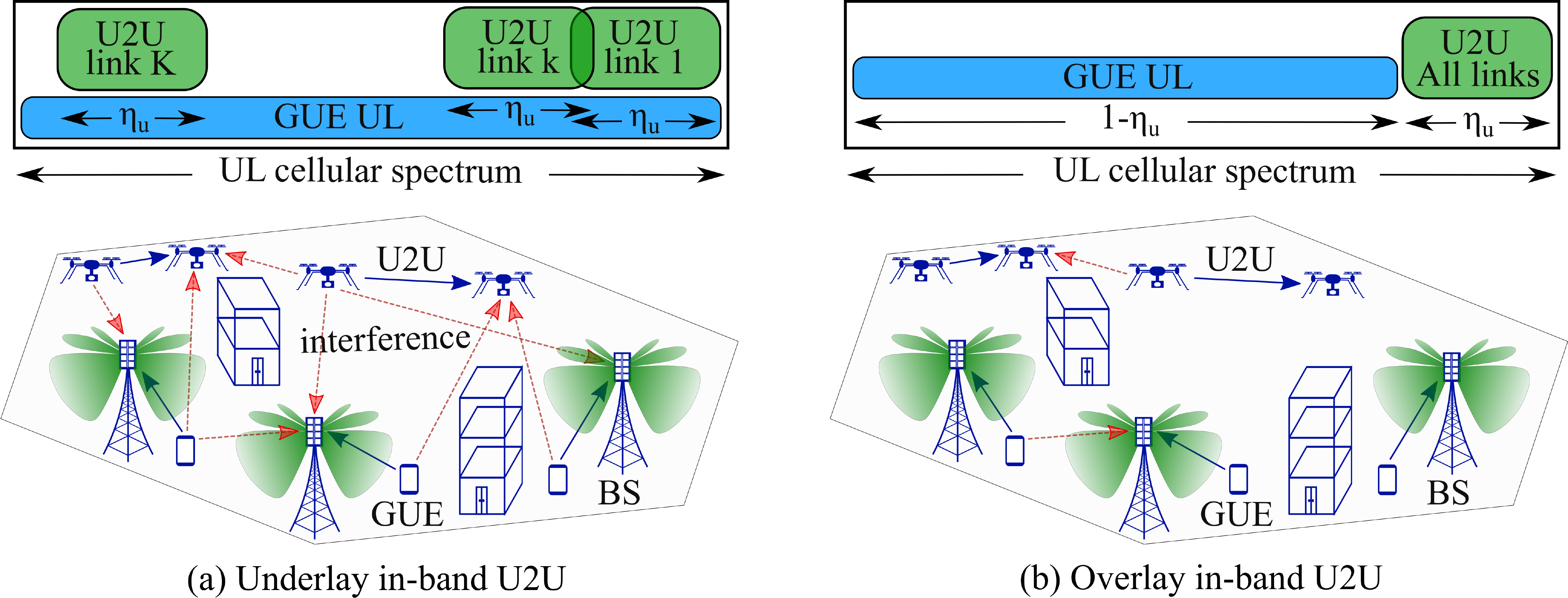}
	\caption{U2U communications sharing spectrum with the cellular UL. Blue solid (resp. red dashed) arrows indicate communication (resp. interfering) links. In (a)---underlay in-band U2U---GUEs occupy the whole spectrum while UAVs occupy a fraction $\eta_\u$, incurring mutual GUE-U2U interference. In (b)---overlay in-band U2U---the spectrum is split into orthogonal portions, with a fraction $\eta_\u$ reserved to UAVs and no mutual interference.}
	\label{U2U_SystemModel}
\end{figure*}


\subsection{Network Topology}

We consider a cellular network as in Fig.~\ref{U2U_SystemModel}, where base stations (BSs) are deployed at a height $\hb$ and uniformly distributed as a Poisson point process (PPP) $\Phib \in \mathbb{R}^2$ with density $\lamb$. We assume that GUE UL transmissions and U2U direct communications reuse the same spectrum.

\subsubsection*{GUE UL communications}
GUEs transmit towards their closest BS\footnote{A GUE may connect to a BS $b$ other than the closest one $a$ if its link is in line-of-sight (LoS) condition with $b$ and not with $a$. This is however unlikely, since the probability of LoS decreases with the distance \cite{3GPP36777}.} and form an independent PPP $\Phig \in \mathbb{R}^2$ with density $\lamg=\lamb$ on each PRB. Accordingly, the 2-D distance between a GUE and its serving BS follows a Rayleigh distribution with parameter $\sigma_\c = 1/\sqrt{2\pi \lamc}$ \cite{ChuCotDhi2017}. For the typical BS, the interfering GUEs form a non-homogeneous PPP with density $\lamci(r)\!=\!\lamb(1\!-\!e^{-\lamb \pi r^2})$, function of their 2-D distance $r$ \cite{ChuCotDhi2017,YanGerQue:16}.

\subsubsection*{Direct UAV-to-UAV communications}
We assume that U2U transmitters form a PPP $\Phiu$ with density $\lamu$, and that each U2U receiver is randomly and independently placed around its corresponding transmitter, both at height $\hu$ and with distance $\Ru$ following a truncated Rayleigh distribution with probability density function (PDF) \cite{AzaGerGar19}
\begin{align}
f_{\Ru}(\ru) = \frac{\ru e^{-\r_\u^2/(2\sigma_\mathrm{u}^2)}}{\sigma_\mathrm{u}^2\left(1-\e^{-r_\mathrm{M}^2/(2\sigma_\mathrm{u}^2)}\right)}\cdot\mathds{1}(\ru<r_\mathrm{M}),
\end{align}
where $\r_\mathrm{M}$ is the maximum U2U link distance, $\mathds{1}(\cdot)$ is the indicator function, and $\sigma_\mathrm{u}$ is the Rayleigh scale parameter, related to the mean distance $\bar{R}_\mathrm{u}$ through $\sigma_\mathrm{u} = \sqrt{2 / \pi}\bar{R}_\mathrm{u}$.

\subsection{Propagation Channel and Antenna Pattern Models}

We assume that any radio link between nodes $\x$ and $\y$ is affected by large-scale fading $\zetaxy$, comprising path loss $\tauxy$ and antenna gain $\gxy$, and small-scale fading $\psixy$.

\subsubsection*{Probability of LoS}
We consider that links experience LoS and non-LoS (NLoS) propagation conditions with probabilities $\prxy^\L$ and $\prxy^\N$, respectively. In what follows, the superscripts $\mathrm{L}$ and $\mathrm{N}$ will denote system parameters under LoS and NLoS conditions, respectively. We employ the well-known ITU model to characterize the probability of LoS between any pair of nodes x and y \cite{ITU2012}
\begin{equation}
\prxy^\L(r) \!=\!\!\!\!\!\!\!\! \prod_{j=0}^{\lfloor{\frac{r\sqrt{\a_1\a_2}}{1000}-1\rfloor}} \!\!\Bigg[1\!-\!\mathrm{exp}{\Bigg(\!\!\!-\!\frac{\left[\!\hx\!-\!\frac{(j+0.5)(\hx-\hy)}{\mathrm{k}+1} \!\right]^2}{2 \a_3^2}\Bigg)} \!\Bigg]\!,\!\!
\label{PrLoS}
\end{equation}
where $\a_1 = 0.3$, $\a_2 = 500$, and $\a_3\ = 20$ model an urban scenario, and approximate $\prxy^\L$ with a step function, constant in $[\r_i,\r_{i+1}]$, $i = 1,2,\ldots$, $0=\r_1 < \r_2 <\ldots$

\subsubsection*{Path loss}
The path loss between nodes $\mathrm{x}$ and $\mathrm{y}$ is
\begin{equation}
\tauxy = \hat{\tau}_\mathrm{xy} \, \dxy^{\,\alphaxy},
\end{equation}
where $\hat{\tau}_\mathrm{xy}$ denotes the reference path loss, $\alphaxy$ is the path loss exponent, and $\dxy \!=\! \sqrt{\rxy^2 \!+\! \mathrm{h^2_{xy}}}$, $\rxy$, and $\hxy \!=\! \hx \!-\! \hy$ represent the 3-D distance, 2-D distance, and height difference between $\mathrm{x}$ and $\mathrm{y}$, respectively. In the following, we employ the subscripts $\{\u,\c,\mathrm{b}\}$ to denote UAV, GUE, and BS nodes, respectively.

\subsubsection*{Antenna models}
Let $\gxy$ denote the antenna gain between nodes $\x$ and $\y$. We consider that GUEs and UAVs have a single omnidirectional antenna with 0~dBi gain. Instead, the overall antenna radiation pattern of each BS at zenith angle $\theta$ is given by $g_b(\theta) = g_A(\theta) \cdot g_E(\theta)$, where
\begin{equation}
g_A(\theta) = \frac{\sin^2\Big(N\pi (\cos\theta - \cos\tt)/2\Big)}{N\sin^2\Big(\pi (\cos\theta - \cos\tt)/2\Big)}
\end{equation}
is the array factor of an $\mathrm{N}$-element uniform vertical linear array, and 
\begin{equation} \label{ElementGain}
g_E(\theta) = g_E^{\max} \sin^2\theta
\end{equation}
represents the BS individual antenna element gain. 

\subsubsection*{Small-scale fading}
We adopt the general Nakagami-m model, where the cumulative distribution function (CDF) of the small-scale fading power between nodes $\x$ and $\y$ can be expressed as
\begin{equation} \label{FadingCDF}
F_{\sixy}(\omega) \triangleq \mathbb{P}[\sixy < \omega] \!=\! 1\!-\!\sum_{i=0}^{\mxy-1} \!\frac{(\mxy \omega)^i}{i!} e^{-\mxy \omega},
\end{equation}
where $\mxy \in \mathbb{Z}^{+}$ is the fading parameter.

\subsection{Spectrum Sharing and Power Control Policies}

Let the available spectrum be divided into $\mathrm{n}$ PRBs. We consider two spectrum sharing strategies---underlay and overlay---, illustrated in Fig.~\ref{U2U_SystemModel} and described as follows.

\subsubsection*{Underlay in-band U2U}

Each PRB may be used by both link types \cite{LinAndGho2014}. In particular, we assume that:
\begin{itemize}[leftmargin=*]
\item Each active GUE occupies all $\mathrm{n}$ PRBs. This is consistent with a cellular operator's goal of preserving the performance of its legacy ground users \cite{GerGarGal2018}.
\item Each U2U transmitter occupies a fraction $\eta_\u$ of all PRBs, also employing frequency hopping to randomize its interference to other links. Specifically, each U2U transmitter may randomly and independently access $\eta_\u \cdot \mathrm{n}$ PRBs, where the factor $\eta_\u \in [0,1]$ measures the aggressiveness of the U2U spectrum access, and is denoted the spectrum access factor in the underlay. As a result, the density of interfering UAVs is $\hat{\lambda}_\u = \eta_\u \cdot \lamu$.
\end{itemize}

\subsubsection*{Overlay in-band U2U}

The available UL spectrum is split into two orthogonal portions. A fraction $\eta_\u$ is reserved for U2U communications, and UAVs access all $\eta_\u \cdot \mathrm{n}$ allocated PRBs without frequency hopping. Similarly, the remaining fraction $\eta_\g = 1 - \eta_\u$ is reserved to the GUEs UL, and active GUEs access all $\eta_\g \cdot \mathrm{n}$ PRBs allocated. We denote $\eta_\u$ the spectrum partition factor in the overlay. This approach results in each GUE UL link being interfered only by other GUEs, and in each U2U link being interfered only by other UAVs.

\subsubsection*{Fractional power control}
As per the cellular systems currently deployed, the power transmitted per PRB by a given node $\x$ is adjusted depending on the receiver $\y$ and can be computed as \cite{BarGalGar2018GC}
\begin{equation}
\Px = \min\left\{ \Pxmax, \rhox \cdot \zetaxy^{\epsx} \right\},
\label{eqn:power_control}
\end{equation}
where $\Pxmax$ is given by the maximum transmit power constraint divided by the number of PRBs utilized by node $\x$ for transmission, i.e., $P_{\mathrm{max}} / \mathrm{n_x}$, whereas $\rhox$ is a parameter adjusted by the network and $\epsx \in [0,1]$ is the fractional power control factor \cite{3GPP36777}.

\section{Analytical Results}
\label{sec:analysis_underlay}



In what follows, we will analyze the coverage probability, denoted by $\mathcal{C}_\x$ for node $\x$.\footnote{Due to space constraints, proofs are omitted and available in \cite{azari2019uav}.} This is defined as the complementary CDF (CCDF) of the signal-to-interference-plus-noise-ratio (SINR), i.e., the probability of the SINR at node $\x$, SINR$_\x$, being beyond a certain threshold $\t$:
\begin{equation}
	\mathcal{C}_\x \triangleq \mathbb{P}\{\sinr_\x>\t\}.
\end{equation}
From $\mathcal{C}_\x$, the coverage rate probability can be obtained as the CCDF of the achievable rate $\Rx$ at node $\x$ \cite{bai2015coverage}:
\begin{equation}
\mathbb{P}[\Rx>\t]=\Cx(2^{\t/\Bx}-1),
\end{equation}
with $\Bx$ denoting the bandwidth accessed by node $\x$.


\subsection{Preliminaries}

In order to obtain more compact analytical expressions, we employ the following approximations whose accuracy will be validated in Section~\ref{sec:numerical}.

\begin{Approximation} \label{FadingCDFapp}
	We approximate the CDF of the Nakagami-m small-scale fading power $\sixy$ in \eqref{FadingCDF} as
	\begin{equation} \label{eq:CDFapp}
	F_{\sixy}(\omega) \approx \left( 1- e^{-\mathrm{b_{xy}}\,\omega}\right)^{\mxy},
	\end{equation}
	where $\mathrm{b_{xy}}$ is a function of $\mxy$ provided in \cite[Table II]{azari2019uav}.
	
	
\end{Approximation}

Approximation~\ref{FadingCDFapp} is inspired by \cite{bai2015coverage} and allows to derive closed-form expressions for the Laplacian of the interference, and in turn for the coverage probability. The value of $\mathrm{b_{xy}}$ is obtained through curve fitting. 

\begin{Approximation} \label{NLoSapp}
We neglect the NLoS UAV-to-UAV, GUE-to-UAV, and UAV-to-BS interfering links.
\end{Approximation}

Approximation~\ref{NLoSapp} holds due to a high probability of having LoS links dominating the interference \cite{azari2019cellular,GerGarGal2018}.

\begin{Approximation} \label{MeanPowerapp}
We approximate the UAVs transmit power, which is a random variable, with its mean value.
\end{Approximation}	

Approximation~\ref{MeanPowerapp} is motivated by the fact that U2U links tend to undergo LoS conditions, and thus a lower path loss exponent. This implies a lower variation of the UAV transmit power with respect to its distance from the receiver. This approximation removes one integral in the computation of the coverage probability. The mean UAV transmit power is obtained as follows.
	
\begin{Proposition} \label{proposition:meanUAVtxPower}
	The mean UAV transmit power is given by
	\begin{align} \label{eqn:meanPower}
	\mathbb{E}[\pu] =  \!\!\! \sum_{\nu \in \{\mathrm{L},\mathrm{N}\}}  &\Big[\sum_{i=1}^{j} [C_i^\nu-C_{i+1}^\nu] \,\, \gamma\Big(1+\frac{\auu^\nu \eu}{2},y_{i+1}\Big)\nonumber\\
	&+\sum_{i=j+1}^{k+1} [B_i^\nu-B_{i-1}^\nu] \, e^{-y_i} \Big]
	\end{align}
	where
	\begin{align} 
	C_i^\nu &\!=\! \frac{(2\sigma_\u^2)^{{\auu^\nu\eu/2}}\pur\left(\hat{\tau}_{\mathrm{uu}}^\nu/\guu\right)^{\eu}}{1-\e^{-r_\mathrm{M}^2/(2\sigma_\mathrm{u}^2)}} \cdot \pruu^\nu(r_i), \text{ for } i>0,
	\end{align}
	$B_j^\nu = 0$, $B_{k+1}^\nu = 0$, and
	\begin{align}
	B_i^\nu &= \frac{\pumax \, \pruu^\nu(r_i) }{1-\e^{-r_\mathrm{M}^2/(2\sigma_\mathrm{u}^2)}};~i>j,
	\end{align}
	and where $j = \lfloor{\frac{r_m^\nu\sqrt{\a_1\a_2}}{1000}\rfloor}+1$, $k = \lfloor{\frac{r_M\sqrt{\a_1\a_2}}{1000}\rfloor}+2$, $y_i = r_i^2 / 2\sigma_u^2$, $r_k = r_M$, and $r_{j+1} = r_m^\nu$. The latter is the distance at which the UAV reaches its maximum allowed transmit power, which depends on the link condition (LoS vs. NLoS) and can be obtained from \eqref{eqn:power_control} as follows
	\begin{equation}
	r_m^\nu = \left(\frac{\guu}{\hat{\tau}_{\mathrm{uu}}^\nu}\right)^{1/\auu^\nu} \cdot \left(\frac{\pumax}{\pur}\right)^{1/(\auu^\nu\eu)}.
	\end{equation}
\end{Proposition}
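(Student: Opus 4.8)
The plan is to compute $\mathbb{E}[\pu]$ by averaging the fractional power-control law \eqref{eqn:power_control}, specialized to a U2U link, over the two sources of randomness: the link length $\Ru$, distributed according to the truncated Rayleigh PDF $f_{\Ru}$, and the LoS/NLoS state of the link, which for a given length $r$ occurs with probability $\pruu^\L(r)$ or $\pruu^\N(r)$. Conditioning first on the state and then on the distance, the tower property gives
\begin{equation}
\mathbb{E}[\pu] = \sum_{\nu \in \{\L,\N\}} \int_0^{r_\mathrm{M}} \pu^\nu(r)\, \pruu^\nu(r)\, f_{\Ru}(r)\, \d r,
\end{equation}
where $\pu^\nu(r)$ is the transmit power of a state-$\nu$ link of length $r$. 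This already accounts for the outer sum over $\nu$ in \eqref{eqn:meanPower}; the remaining work is to evaluate the inner integral.

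First I would make the $\min$ in \eqref{eqn:power_control} explicit. Since both UAVs lie at the common height $\hu$, the 3-D and 2-D distances coincide, so the large-scale term reads $\zeta_{\mathrm{uu}}^\nu = \hat{\tau}_{\mathrm{uu}}^\nu\, r^{\auu^\nu}/\guu$ and the candidate open-loop power $\pur(\hat{\tau}_{\mathrm{uu}}^\nu/\guu)^{\eu}\, r^{\auu^\nu\eu}$ is increasing in $r$. Hence there is a single crossover distance $r_m^\nu$ at which it meets $\pumax$; solving $\pur(\hat{\tau}_{\mathrm{uu}}^\nu/\guu)^{\eu}(r_m^\nu)^{\auu^\nu\eu}=\pumax$ yields the stated $r_m^\nu$. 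For $r<r_m^\nu$ the link is power-limited with $\pu^\nu(r)=\pur(\hat{\tau}_{\mathrm{uu}}^\nu/\guu)^{\eu}r^{\auu^\nu\eu}$, whereas for $r\ge r_m^\nu$ it saturates at $\pu^\nu(r)=\pumax$, so I would split the inner integral at $r_m^\nu$.

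Next I would invoke the step-function approximation of $\pruu^\nu$ introduced after \eqref{PrLoS}, which is piecewise constant on the grid $\{r_i\}$; I augment this grid with the two extra breakpoints $r_{j+1}=r_m^\nu$ and $r_k=r_\mathrm{M}$, the floor expressions for $j$ and $k$ being precisely the bookkeeping that locates these insertions within the natural LoS grid. On each subinterval $[r_i,r_{i+1}]$ the factor $\pruu^\nu$ equals the constant $\pruu^\nu(r_i)$ and can be pulled out. Substituting $y=r^2/(2\sigma_\mathrm{u}^2)$ turns the Rayleigh weight $f_{\Ru}(r)\,\d r$ into $e^{-y}\,\d y/(1-e^{-r_\mathrm{M}^2/(2\sigma_\mathrm{u}^2)})$, with endpoints $y_i$. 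In the saturated region the integrand is proportional to $e^{-y}$ and each subinterval contributes an exponential difference $e^{-y_i}-e^{-y_{i+1}}$, while in the power-limited region the $r^{\auu^\nu\eu}$ factor becomes $y^{\auu^\nu\eu/2}$ and each subinterval contributes a lower-incomplete-gamma difference $\gamma(1+\auu^\nu\eu/2,y_{i+1})-\gamma(1+\auu^\nu\eu/2,y_i)$. Collecting the constants produces exactly the coefficients $C_i^\nu$ and $B_i^\nu$ of the statement.

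Finally, to reach the compact form \eqref{eqn:meanPower} I would rearrange the two finite sums by summation by parts (Abel summation). Using $\gamma(1+\auu^\nu\eu/2,y_1)=0$ at $r_1=0$ together with the boundary conventions $B_j^\nu=B_{k+1}^\nu=0$, the differences $\gamma(\cdot,y_{i+1})-\gamma(\cdot,y_i)$ and $e^{-y_i}-e^{-y_{i+1}}$ telescope into the $[C_i^\nu-C_{i+1}^\nu]$ and $[B_i^\nu-B_{i-1}^\nu]$ weights appearing in \eqref{eqn:meanPower}. I expect the main obstacle to lie entirely in this last step: one must track the off-by-one index shifts caused by inserting $r_m^\nu$ and $r_\mathrm{M}$ into the grid, check that the boundary terms generated by the Abel rearrangement vanish under the stated conventions, and confirm that the gamma-valued (power-limited) and exponential (saturated) blocks join correctly at the crossover index $i=j$. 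Everything else reduces to a routine change of variables and the evaluation of elementary integrals.
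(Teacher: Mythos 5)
Your proposal is correct and follows what is essentially the paper's own route (the paper defers the details to Appendix~E of the extended version \cite{azari2019uav}): condition on the LoS/NLoS state, average the fractional power-control law over the truncated Rayleigh link distance, split the integral at the saturation distance $r_m^{\nu}$ where the $\min$ in \eqref{eqn:power_control} switches branches, exploit the step-function approximation of $\mathsf{p}_{\mathrm{uu}}^{\nu}$ on the augmented grid containing $r_m^{\nu}$ and $r_{\mathrm{M}}$, and substitute $y=r^2/(2\sigma_{\mathrm{u}}^2)$ to obtain the lower-incomplete-gamma and exponential building blocks with precisely the coefficients $C_i^{\nu}$ and $B_i^{\nu}$. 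The only piece you leave unexecuted is the final summation-by-parts bookkeeping that you yourself flag---in particular verifying that the term $C_{j+1}^{\nu}\,\gamma\bigl(1+\alpha_{\mathrm{uu}}^{\nu}\epsilon_{\mathrm{u}}/2,\,y_{j+1}\bigr)$ generated at the crossover index $i=j$ is exactly compensated by the contribution of the saturated region, and that the conventions $B_j^{\nu}=B_{k+1}^{\nu}=0$ and $r_k=r_{\mathrm{M}}$ kill the remaining boundary terms---which is indeed routine and does not affect the validity of the approach.
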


\begin{proof}
Available in \cite[Appendix E]{azari2019uav}.
\end{proof}

\subsection{U2U Coverage Probability: Underlay and Overlay} \label{sec:U2U_performance}

We now analyze the performance of a U2U communication link in the underlay and in the overlay.

\begin{Theorem} \label{proposition:U2Ucoverage}
	Under Approximations~1-3, the underlay U2U coverage probability is given by
	\begin{align} \label{eq:U2ULinkCoverage}
	\pcovuav = \int_0^\rMuu f_{\Ru}^\L(\ru) \mathcal{C}_{\mathrm{u}|\Ru}^\L(\ru) \mathrm{d}\ru,
	\end{align}
	where
	\begin{equation}
	\mathcal{C}_{\mathrm{u}|\Ru}^\L(\ru) = \sum_{i=1}^{\mathrm{m_{uu}^\L}} \binom{\mathrm{m_{uu}^\L}}{i}(-1)^{i+1} e^{-z_{\u,i}^\L \mathrm{N_0}} \cdot \lapiu^\L(z_{\u,i}^\L),
	\end{equation}
	\begin{align}
	\lapiu^\L(z_{\u,i}^\L) = \underbrace{e^{ -2 \pi \hat{\lambda}_\u \mathcal{I}_\mathrm{uu}^\L}}_{\text{due to LoS UAVs}} \cdot \underbrace{e^{ -2 \pi \lamb \mathcal{I}_\mathrm{gu}^\L }}_{\text{due to LoS GUEs}},
	\end{align}
	and
	\begin{align}
	\mathcal{I}_\mathrm{uu}^\L &= \sum_{j = 1}^\infty \left[\pruu^{\L}(\r_{j-1})-\pruu^{\L}(\r_{j})\right] \underbrace{\Psi_\mathrm{uu}^\L\left(\mathrm{s},\r_{j}\right)}_\text{at $\pu = \bar{P}_\mathrm{u}$},
	\end{align}
	with $\mathcal{I}_\mathrm{gu}^\L$ and $\Psi_\mathrm{uu}^\L$ obtained from \cite[Theorem~1]{AzaGerGar19} and
	\begin{equation}
	\mathrm{s} = z_{\u,i}^\L \frac{\mathrm{g_{uu}}}{\hat{\tau}_{\mathrm{uu}}^\L};
	~~z_{\u,i}^\L = \frac{i b_{\u\u}^\L \t}{\pu^\L \zuu^\L(\ru)^{-1}}.
	\end{equation} 
\end{Theorem}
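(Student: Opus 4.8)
The plan is to compute $\pcovuav$ through the standard stochastic-geometry recipe: condition on the geometry of the desired link, reduce the coverage event to a fading-tail probability, and then average over the interference field via its Laplace transform. First I would condition on the desired-link distance $\Ru=\ru$ and on the desired link being in LoS; integrating the resulting conditional coverage $\mathcal{C}_{\u|\Ru}^\L(\ru)$ against the LoS link-distance density $f_{\Ru}^\L$ over $[0,\rMuu]$ yields the outer integral in \eqref{eq:U2ULinkCoverage}. Writing the received SINR as $\sinr_\u = \pu^\L\,\zuu^\L(\ru)^{-1}\,\suu\,/\,(\mathrm{N_0}+\iu)$, the coverage event $\{\sinr_\u>\t\}$ is equivalent to a tail event on the desired-link fading power $\suu$, whose CDF I would replace by Approximation~\ref{FadingCDFapp}. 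The key algebraic step is then the binomial expansion $1-(1-e^{-\mathrm{b}\omega})^{\mathrm{m}}=\sum_{i=1}^{\mathrm{m}}\binom{\mathrm{m}}{i}(-1)^{i+1}e^{-i\mathrm{b}\omega}$, which turns the fading tail into a finite sum of exponentials in the interference-plus-noise term; identifying the exponent coefficient with $z_{\u,i}^\L$ and pulling out the deterministic factor $e^{-z_{\u,i}^\L\mathrm{N_0}}$ leaves exactly $\mathbb{E}[e^{-z_{\u,i}^\L \iu}]=\lapiu^\L(z_{\u,i}^\L)$ inside each summand, reproducing the stated form of $\mathcal{C}_{\u|\Ru}^\L(\ru)$.

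Next I would evaluate the interference Laplace transform. Because the interfering UAVs and the interfering GUEs are drawn from two independent PPPs, the aggregate interference splits additively and its Laplace transform factors into the product $\mathbb{E}[e^{-z_{\u,i}^\L \iuu}]\cdot\mathbb{E}[e^{-z_{\u,i}^\L \icu}]$, matching the two displayed exponential factors (``due to LoS UAVs'' and ``due to LoS GUEs''). By Approximation~\ref{NLoSapp} I would retain only the LoS interferers of each type, and by Approximation~\ref{MeanPowerapp} I would replace the random UAV transmit power by its mean $\bar{P}_\u$ from Proposition~\ref{proposition:meanUAVtxPower}, which is what the annotation ``at $\pu=\bar{P}_\u$'' records. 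Each factor is then computed with the probability generating functional of the corresponding PPP: the UAV factor uses the thinned density $\hat{\lambda}_\u=\eta_\u\lamu$ to account for frequency hopping, while the GUE factor uses $\lamb$. This produces a radial integral of $1-\mathbb{E}_{\suu}[e^{-\mathrm{s}\,\suu\,(\text{path loss})}]$ over the interferer plane, where $\mathrm{s}=z_{\u,i}^\L\,\guu/\hat{\tau}_{\mathrm{uu}}^\L$, with $\bar{P}_\u$ absorbed into the inner building block.

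The remaining work is to carry out the radial part of the PGFL integral under the step-function LoS model. Since $\pruu^\L(r)$ is taken constant on each interval $[\r_{j-1},\r_j]$, I would split the radial integral into annuli and sum the per-annulus contributions; writing each annulus weight as the increment of the LoS probability gives the telescoped form $\mathcal{I}_\mathrm{uu}^\L=\sum_{j\ge1}[\pruu^\L(\r_{j-1})-\pruu^\L(\r_j)]\,\puu^\L(\mathrm{s},\r_j)$, where $\puu^\L$ is the inner-annulus integral evaluated at the mean UAV power. The GUE-to-UAV term $\mathcal{I}_\mathrm{gu}^\L$ and the building block $\puu^\L$ follow the same template and can be imported from \cite[Theorem~1]{AzaGerGar19}. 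Collecting the binomial sum, the factored Laplace transform, and the two interference integrals gives $\mathcal{C}_{\u|\Ru}^\L(\ru)$, and the outer integration over $\ru$ closes the proof.

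I expect the main obstacle to be the PGFL radial integration combined with the step-function LoS approximation and the mean-power substitution: one must justify pulling the piecewise-constant LoS probability through the annulus decomposition and verify that the resulting difference-of-LoS-probabilities weighting is exactly consistent with how $\puu^\L$ is defined, so that the infinite sum converges and reproduces the claimed closed form rather than merely a bound.
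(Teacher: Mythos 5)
Your proposal is correct and follows what is evidently the paper's own route (the proof is deferred to \cite[Appendix~D]{azari2019uav}, but the structure of the stated result---the binomial sum from Approximation~\ref{FadingCDFapp}, the factored Laplace transform over independent PPPs, the thinned density $\hat{\lambda}_\u$, the mean-power substitution, and the telescoped sum over the step-function LoS model---matches your derivation step for step). The one point worth making explicit is that restricting the outer integral to the LoS density $f_{\Ru}^\L$ treats the desired link as LoS, which goes slightly beyond Approximation~\ref{NLoSapp} as literally stated (it concerns interfering links) and is justified by Remark~\ref{UAVpowerLoS}-type reasoning at typical UAV altitudes.
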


\begin{proof}
Available in \cite[Appendix D]{azari2019uav}.
\end{proof}

\begin{Corollary}
Under Approximations~1-3, the overlay U2U coverage probability can be obtained from Theorem~\ref{proposition:U2Ucoverage} by substituting $\lamb=0$, $\lamuhat=\lamu$, and
	\begin{equation}
	\lapiu^\L(z_{\u,i}^\L) = e^{-2 \pi \lamu \mathcal{I}_\mathrm{uu}^\mathrm{\L}(z_{\u,i}^\L)  },
	\end{equation}
since the aggregate interference only includes the UAV-generated one.
\end{Corollary}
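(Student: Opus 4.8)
The plan is to re-run the derivation of Theorem~\ref{proposition:U2Ucoverage} under the overlay assumptions of Section~\ref{sec:System_Model}, pinpointing the two places where the interference statistics differ from the underlay while confirming that the desired-signal statistics are unchanged. First I would note that the tagged U2U link is physically the same in both schemes: its length still obeys $f_{\Ru}^\L$, and its per-PRB transmit power is still set by \eqref{eqn:power_control} with the identical cap $\Pxmax = P_\mathrm{max}/\mathrm{n_x}$, because a UAV occupies $\eta_\u\mathrm{n}$ PRBs in the underlay (via frequency hopping) and the same $\eta_\u\mathrm{n}$ reserved PRBs in the overlay. Hence the conditional coverage $\mathcal{C}_{\mathrm{u}|\Ru}^\L(\ru)$, the outer integral in \eqref{eq:U2ULinkCoverage}, the per-interferer kernel $\Psi_\mathrm{uu}^\L$, and the integrand $\mathcal{I}_\mathrm{uu}^\L$ evaluated at the mean power $\bar{P}_\mathrm{u}$ of Proposition~\ref{proposition:meanUAVtxPower} all carry over verbatim, and only the aggregate interference needs to be revisited.

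Next I would decompose the interference at the typical U2U receiver as $I_\mathrm{u}=I_\mathrm{uu}+I_\mathrm{gu}$ into its UAV- and GUE-generated parts. Since $\Phiu$ and $\Phig$ are independent PPPs, the Laplace transform factorizes into the product of the two exponentials appearing in Theorem~\ref{proposition:U2Ucoverage}. In the overlay the U2U-reserved and GUE-reserved PRBs are orthogonal, so a receiver listening on its reserved band sees no ground interference and $I_\mathrm{gu}\equiv 0$. Because the interfering-GUE density coincides with the BS density, $\lamg=\lamb$, this is exactly the substitution $\lamb=0$, which sends the GUE factor $e^{-2\pi\lamb\,\mathcal{I}_\mathrm{gu}^\L}$ to unity.

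The last step is to correct the UAV interferer density. In the underlay, frequency hopping independently thins $\Phiu$ on a typical PRB with retention probability $\eta_\u$, giving an interfering PPP of density $\hat{\lambda}_\u=\eta_\u\lamu$. In the overlay UAVs do not hop and all transmit on the same block of reserved PRBs, so the set of interferers on a reserved PRB is the full process $\Phiu$ of density $\lamu$; this is the substitution $\lamuhat=\lamu$. Plugging both modifications into the Laplace transform of Theorem~\ref{proposition:U2Ucoverage} collapses it to $\lapiu^\L(z_{\u,i}^\L)=e^{-2\pi\lamu\,\mathcal{I}_\mathrm{uu}^\L(z_{\u,i}^\L)}$, and reinserting this into \eqref{eq:U2ULinkCoverage} yields the stated overlay coverage probability.

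I expect the only genuinely delicate point to be the thinning argument, i.e.\ justifying that frequency hopping in the underlay acts as an independent PPP thinning seen from a typical PRB---so that its removal in the overlay raises the interferer density from $\eta_\u\lamu$ to $\lamu$---while simultaneously checking that this density change does \emph{not} feed back into the desired-signal power, since each UAV still transmits over $\eta_\u\mathrm{n}$ PRBs and hence retains the same per-PRB power cap in both schemes. Everything else is bookkeeping once Theorem~\ref{proposition:U2Ucoverage} is in hand.
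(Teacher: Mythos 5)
Your proposal is correct and follows essentially the same route as the paper, which justifies the corollary in one line by observing that in the overlay the aggregate interference at a U2U receiver contains only the UAV-generated term (hence $\lamb=0$ kills the GUE factor) and that the absence of frequency hopping restores the full interferer density $\lamuhat=\lamu$. Your additional checks---that the desired-signal statistics and the per-PRB power cap are unchanged because a UAV occupies $\eta_\u\mathrm{n}$ PRBs in both schemes, and that underlay hopping acts as an independent thinning---are exactly the implicit bookkeeping behind the paper's statement.
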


\begin{Remark} \label{UAVpowerLoS}
When the UAV altitude is sufficiently high, UAV-UAV links become almost always LoS as per \eqref{PrLoS}, and we can employ $\pruu^\L(\r_i) = 1$ for any $i$ and hence $\mathcal{I}_\mathrm{uu}^\L = - \Psi_\mathrm{uu}^\L\left(\mathrm{s},0 \right)$. Moreover, the mean UAV transmit power in (\ref{eqn:meanPower}) can be simplified as follows
	\begin{equation}
	\begin{aligned}
	\mathbb{E}[\pu] &= C_1^\L \,\, \gamma\left(1+\frac{\auu^\L \eu}{2},\frac{1}{2}\left(\frac{r_m^\L}{\sigma_u}\right)^2\right) \\
	&\!\!\!\!\!\!\!\!\!\!\!\!\!\!\!\!+ \frac{\pumax }{1-\e^{-r_\mathrm{M}^2/(2\sigma_\mathrm{u}^2)}}\left(e^{-\frac{1}{2}\left({r_m^\L/\sigma_u}\right)^2} - e^{-\frac{1}{2}\left({r_M/\sigma_u}\right)^2}\right).
	\end{aligned}
	\end{equation}
\end{Remark}


\subsection{GUE Coverage Probability: Underlay and Overlay} \label{sec:GUE_performance}

The coverage probability of the GUE UL in the underlay and in the overlay is obtained as follows.

\begin{Theorem} \label{proposition:GUEcov}
Under Approximations~1-3, the underlay GUE UL coverage probability is given by
	\begin{align} 
	\pcovc &= \!\!\!\! \sum_{\nu \in \{\mathrm{L},\mathrm{N}\}}\int_0^\infty f_{R_\c}^\nu(\rc)\, \mathcal{C}_{\c|\Rc}^\nu(\rc)  \,\,\mathrm{d}\rc,
	\end{align}
	where
	\begin{align} 
	\mathcal{C}_{\c|\Rc}^\nu(\rc) &= \sum_{i=1}^{\mathrm{m_{gb}^\nu}} \binom{\mathrm{m_{gb}^\nu}}{i}(-1)^{i+1} e^{-z_{\g,i}^\nu \mathrm{N_0}} \cdot \lapic^\nu(z_{\g,i}^\nu),
	\end{align}
	and  
	\begin{align} \label{eqn:Lig}
	\lapic^\nu(z_{\g,i}^\nu) &= \underbrace{e^{ -2 \pi \hat{\lambda}_\u \mathcal{I}_\mathrm{ug}^\L }}_{\text{due to LoS UAVs}} \cdot \underbrace{e^{ -2 \pi \lamb \sum_{\xi \in \{\mathrm{L},\mathrm{N}\}}\mathcal{I}_\mathrm{gg}^\mathrm{\xi} }}_{\text{due to GUEs}}.
	\end{align}
	In (\ref{eqn:Lig}), $\mathcal{I}_\mathrm{ug}^\L$ is given by
	\begin{align}
	\mathcal{I}_\mathrm{ug}^\L &= \sum_{j = 1}^\infty \pub^{\L}(\r_j) \Big(\underbrace{\Psi_\mathrm{ub}^\L\left(\mathrm{s},\r_{j+1}\right) - \Psi_\mathrm{ub}^\L\left(\mathrm{s},\r_{j}\right)}_\text{at $\pu = \bar{P}_\mathrm{u}$}\Big),
	\end{align}
	whereas $\mathcal{I}_\mathrm{gg}^\mathrm{\xi}$ and $\Psi_\mathrm{ub}^\L$ are provided in \cite[Theorem~2]{AzaGerGar19} where we replace $\mathrm{s}$ with
	\begin{equation}
	z_{\g,i}^\nu = \frac{i b_{\c\mathrm{b}}^\nu \t}{\pc^\nu \zcb^\nu(\ru)^{-1}}.
	\end{equation}
\end{Theorem}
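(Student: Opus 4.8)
The plan is to follow the standard stochastic-geometry uplink-coverage recipe, specialized to the GUE-to-BS link and adapted to the two independent interfering fields. First I would condition on the serving-link geometry: the typical BS decodes its nearest associated GUE, whose 2-D distance $r_\mathrm{g}$ is distributed as $f_{R_\mathrm{g}}^\nu$ and whose link is LoS or NLoS with state $\nu \in \{\mathrm{L},\mathrm{N}\}$. Averaging over the distance and summing over $\nu$ produces the outer structure $\pcovc = \sum_{\nu} \int_0^\infty f_{R_\mathrm{g}}^\nu(r_\mathrm{g})\,\mathcal{C}_{\mathrm{g}|R_\mathrm{g}}^\nu(r_\mathrm{g})\,\mathrm{d}r_\mathrm{g}$, so the task reduces to the conditional coverage $\mathcal{C}_{\mathrm{g}|R_\mathrm{g}}^\nu$.

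To obtain the conditional coverage I would write $\sinr_\mathrm{g} = P_\mathrm{g}\,\zeta_\mathrm{gb}^{-1}\,\psi_\mathrm{gb}/(I_\mathrm{g}+\mathrm{N_0})$ and express coverage as the fading tail $\mathbb{P}[\psi_\mathrm{gb} > \mathrm{T}(I_\mathrm{g}+\mathrm{N_0})\,\zeta_\mathrm{gb}/P_\mathrm{g}]$. Applying the Nakagami-$m$ CCDF approximation (Approximation~\ref{FadingCDFapp}), $\mathbb{P}[\psi_\mathrm{gb}>\omega] \approx 1-(1-e^{-b_\mathrm{gb}\omega})^{\mathrm{m_{gb}}}$, and expanding the $\mathrm{m_{gb}^\nu}$-th power by the binomial theorem turns this tail into the alternating sum $\sum_{i=1}^{\mathrm{m_{gb}^\nu}}\binom{\mathrm{m_{gb}^\nu}}{i}(-1)^{i+1}e^{-i b_\mathrm{gb}^\nu \omega}$. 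Substituting $\omega$ and absorbing the prefactor into $z_{\mathrm{g},i}^\nu$ (as defined in the statement), the exponent splits as $z_{\mathrm{g},i}^\nu(I_\mathrm{g}+\mathrm{N_0})$; taking the expectation over the interference then replaces each $e^{-z_{\mathrm{g},i}^\nu I_\mathrm{g}}$ by the Laplace transform $\lapic^\nu(z_{\mathrm{g},i}^\nu)$ and leaves the noise factor $e^{-z_{\mathrm{g},i}^\nu \mathrm{N_0}}$, reproducing the claimed form of $\mathcal{C}_{\mathrm{g}|R_\mathrm{g}}^\nu$.

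It remains to evaluate the interference Laplace transform. Because the UAV transmitters and the interfering GUEs form independent point processes, $\lapic^\nu$ factorizes into a UAV-generated and a GUE-generated term, yielding the two exponentials in \eqref{eqn:Lig}. For the ground-to-ground contribution I would invoke the cited result (AzaGerGar19, Thm.~2), whose non-homogeneous interferer density already encodes the nearest-BS association, reusing $\mathcal{I}_\mathrm{gg}^\xi$ and $\Psi_\mathrm{ub}^\mathrm{L}$ after substituting $z_{\mathrm{g},i}^\nu$ for the generic Laplace variable. For the UAV-to-BS contribution I would apply the PGFL of the UAV PPP (density $\lamuhat$), retaining only LoS interferers (Approximation~\ref{NLoSapp}) and fixing the UAV transmit power at its mean $\bar{P}_\mathrm{u}$ from Proposition~\ref{proposition:meanUAVtxPower} (Approximation~\ref{MeanPowerapp}). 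Writing $\Psi_\mathrm{ub}^\mathrm{L}(\mathrm{s},r)$ for the cumulative UAV-to-BS interference integral out to distance $r$ and using the piecewise-constant LoS model lets me evaluate the radial integral annulus-by-annulus, weighting each annular increment $\Psi_\mathrm{ub}^\mathrm{L}(\mathrm{s},r_{j+1})-\Psi_\mathrm{ub}^\mathrm{L}(\mathrm{s},r_j)$ by the constant LoS probability $p_\mathrm{ub}^\mathrm{L}(r_j)$ and summing to obtain $\mathcal{I}_\mathrm{ug}^\mathrm{L}$.

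\textbf{Main obstacle.} The delicate part is the UAV-to-BS term $\mathcal{I}_\mathrm{ug}^\mathrm{L}$. Unlike the co-planar ground interferers, each elevated UAV is seen by the BS through a zenith-angle-dependent antenna gain $g_\mathrm{b}(\theta)$, so the per-interferer Laplace factor couples to the full 3-D distance and must be folded correctly into $\Psi_\mathrm{ub}^\mathrm{L}$; at the same time the step-function LoS probability and the mean-power substitution have to be composed so that the radial integration collapses cleanly into the stated sum over intervals. Checking that these three approximations combine without spurious cross terms---and that the resulting kernel matches the $\Psi_\mathrm{ub}^\mathrm{L}$ imported from the prior work---is where the effort concentrates; the GUE-interference factor, by contrast, transfers directly from the cited theorem.
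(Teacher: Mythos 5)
Your proposal is correct and follows essentially the same route the paper takes (its proof of this theorem is deferred to the journal version as ``similar to Theorem~1''): conditioning on the serving distance and LoS/NLoS state, binomial expansion of the approximated Nakagami-$m$ tail from Approximation~1, factorization of the interference Laplace transform over the independent GUE and UAV point processes, PGFL evaluation annulus-by-annulus under the step-function LoS probability, and the mean-power substitution of Approximation~3 for the UAV interferers. Your identification of the BS antenna-gain/3-D-distance coupling in $\Psi_\mathrm{ub}^\L$ as the delicate point is apt, and the summation-by-parts form you give for $\mathcal{I}_\mathrm{ug}^\L$ matches the stated result.
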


\begin{proof}
Similar to proof of Theorem~\ref{proposition:U2Ucoverage} and omitted.
\end{proof}

\begin{Corollary}
Under Approximations~1-3, the overlay GUE UL coverage probability can be obtained from Theorem~\ref{proposition:GUEcov} by replacing $\hat{\lambda}_\u = 0$, since the aggregate interference only includes the GUE-generated one.
\end{Corollary}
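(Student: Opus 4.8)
The plan is to show that, in the coverage-probability expression of Theorem~\ref{proposition:GUEcov}, the UAV-generated interference enters only through a single multiplicative Laplace-transform factor governed by the interfering-UAV density $\hat{\lambda}_\u$, and that this factor collapses to unity in the overlay. First I would write the $\sinr$ at the typical GUE's serving BS on a representative PRB and split the aggregate interference into the UAV-generated part $I_\mathrm{ug}$ and the GUE-generated part (with LoS/NLoS components summed over $\xi$). These two parts are driven by \emph{independent} point processes: the interfering-UAV PPP, thinned by frequency hopping to density $\hat{\lambda}_\u=\eta_\u\lamu$, and the interfering-GUE non-homogeneous PPP of density $\lamci(r)=\lamb(1-e^{-\lamb\pi r^2})$. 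By independence, the Laplace transform of the total interference factorizes into the product displayed in \eqref{eqn:Lig}, namely $\lapic^\nu(z_{\g,i}^\nu)=e^{-2\pi\hat{\lambda}_\u\mathcal{I}_\mathrm{ug}^\L}\cdot e^{-2\pi\lamb\sum_\xi\mathcal{I}_\mathrm{gg}^\xi}$, each factor obtained from the probability generating functional of the corresponding process.

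Next I would invoke the overlay definition: the UL spectrum is partitioned into orthogonal portions, so a GUE UL link shares no PRB with any U2U transmitter and therefore receives no UAV interference, i.e.\ $I_\mathrm{ug}=0$ on every GUE PRB. Since the UAV contribution appears solely through the factor $e^{-2\pi\hat{\lambda}_\u\mathcal{I}_\mathrm{ug}^\L}$, removing it is equivalent to letting $\hat{\lambda}_\u\to 0$, which sends that factor to $e^{0}=1$. All remaining quantities in Theorem~\ref{proposition:GUEcov}---the serving-distance density $f_{R_\c}^\nu$, the fading-expansion coefficients, the exponents $z_{\g,i}^\nu$, and the GUE-to-GUE term $\mathcal{I}_\mathrm{gg}^\xi$---depend only on the GUE geometry and power control. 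Crucially, the per-PRB GUE-to-GUE interference statistics are identical in the two modes: in both the underlay and the overlay each active GUE occupies \emph{all} of its allocated PRBs, so on any given PRB the interfering GUEs form the same process of density $\lamci(r)$, and the serving distance retains its Rayleigh law with parameter $\sigma_\c=1/\sqrt{2\pi\lamc}$. Hence the GUE contribution carries over verbatim.

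Combining these observations, the overlay GUE UL coverage probability is exactly the expression of Theorem~\ref{proposition:GUEcov} with the UAV factor deleted, i.e.\ with $\hat{\lambda}_\u=0$, as claimed. I do not anticipate a genuine obstacle, since the statement is essentially a structural consequence of the derivation of Theorem~\ref{proposition:GUEcov}; the one point worth stating explicitly is the clean detachment of the UAV factor. That detachment rests on the independence of the UAV and GUE point processes, which makes the joint interference Laplace transform a product and thereby lets the $\hat{\lambda}_\u$-dependent factor be isolated and set to one without perturbing the remaining terms---so that orthogonal spectrum access corresponds \emph{precisely} to the parameter substitution $\hat{\lambda}_\u=0$ and to no change in the GUE-side quantities.
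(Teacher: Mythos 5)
Your argument matches the paper's own (one-line) justification for this corollary: by independence of the UAV and GUE point processes the interference Laplace transform factorizes as in \eqref{eqn:Lig}, the overlay's orthogonal PRB allocation removes exactly the factor $e^{-2\pi\hat{\lambda}_\u\mathcal{I}_\mathrm{ug}^\L}$, and this is precisely the substitution $\hat{\lambda}_\u=0$. Correct, and essentially the same approach as the paper, just spelled out in more detail.
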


\section{Numerical Results and Discussion}
\label{sec:numerical}

\begin{table}
	\centering
	\caption{System Parameters}
	\label{table:parameters}
	\def\arraystretch{1.2}
	\begin{tabulary}{\columnwidth}{ |p{2.2cm} | p{5.2cm} | }
		\hline
		BS distribution		& PPP with $\lamb = 5$~/~Km$^2$, $\hb=25$~m\\ \hline
		GUE distribution 				& One active GUE per cell, $\hc=1.5$~m \\ \hline
		UAV distribution 				& $\lamu\!=\!1$\,/\,Km$^2$, $\bar{R}_\mathrm{u}\!=\!100\,$m, $\hu\!=\!100\,$m \\ \hline
		\multirow{2}{*}{Ref. path loss [dB]}		&  $\hat{\tau}_\mathrm{cb}^\L = 28+20\log_{10}(f_c)$ \enspace ($f_c$ in GHz) \\ \cline{2-2}
		& $\hat{\tau}_\mathrm{gb}^\N = 13.54+20\log_{10}(f_c)$ \\ \cline{2-2}
		& $\hat{\tau}_\mathrm{ub}^\L = 28+20\log_{10}(f_c)$ \\ \cline{2-2}
		& $\hat{\tau}_\mathrm{ub}^\N = -17.5+20\log{10}(40\pi f_c/3)$ \\ \cline{2-2}
		& $\hat{\tau}_\mathrm{gu}^\L = 30.9+20\log_{10}(f_c)$  \\ \cline{2-2}
		&  $\hat{\tau}_\mathrm{gu}^\N = 32.4+20\log_{10}(f_c)$  \\ \cline{2-2}
		& $\hat{\tau}_\mathrm{uu}^\L = 28+20\log_{10}(f_c)$\\ \cline{2-2}
		& $\hat{\tau}_\mathrm{uu}^\N = -17.5+20\log{10}(40\pi f_c/3)$  \\ \hline
		\multirow{2}{*}{Path loss exponent}		&  $\alcb = 2.2,~~~\ancb = 3.9$ \\ \cline{2-2}
		& $\aubl = 2.2,~~~\aubn = 4.6-0.7\log_{10}(\hu)$ \\ \cline{2-2}
		& $\acul = 2.225-0.05\log_{10}(\hu)$ \\ & $\acun = 4.32-0.76\log_{10}(\hu)$ \\ \cline{2-2}
		& $\auul = 2.2,~~~\auun = 4.6-0.7\log_{10}(\hu)$\\ \hline
		Small-scale fading  & Nakagami-m with $\mxy^\xi = 1$ for NLoS links, $\mxy^\xi = 3$ for LoS GUE links, and $\mxy^\xi = 5$ for LoS UAV links \\ \hline
		Prob. of LoS & ITU model as per \eqref{PrLoS} \\ \hline
		Thermal noise 				& -174 dBm/Hz with 7~dB noise figure \\ \hline 
		\multirow{2}{*}{Spectrum}		& Carrier frequency: 2~GHz \\ \cline{2-2} 
		& Bandwidth: 10 MHz with 50 PRBs \\ \hline 
		BS array configuration		& $8\times 1$ vertical, 1 RF chain, downtilt: $102^{\circ}$, element gain as in \eqref{ElementGain}, spacing: $0.5~\lambda$\\ \hline
		Power control		& Fractional, based on GUE-to-BS (resp. U2U) large-scale fading for GUEs (resp. UAVs), with $\ec = \eu = 0.6$, $\pcr = \pur = -58$~dBm, and $P^{\textrm{max}}_\c = P^{\textrm{max}}_\u = 24$~dBm \cite{BarGalGar2018GC}\\ \hline
		GUE/UAV antenna 		& Omnidirectional with 0~dBi gain \\ \hline 
		\end{tabulary}
\end{table}

In this section, we first validate our analysis and then characterize the performance of U2U and UL GUE cellular communications under overlay and underlay spectrum sharing strategies. For both spectrum sharing mechanisms, we concentrate on evaluating the impact that the number of PRBs allocated to UAVs has on both aerial and ground communications in an urban scenario.\footnote{Unlike \cite{AzaGerGar19}, the results in this section are obtained under general Nakagami-m fading and through compact, tight approximations.} Unless otherwise specified, the system parameters are included in Table~\ref{table:parameters} and follow the 3GPP specifications in \cite{3GPP36777}.


\begin{figure}[t!]
	\centering
	\includegraphics[width=\figwidth]{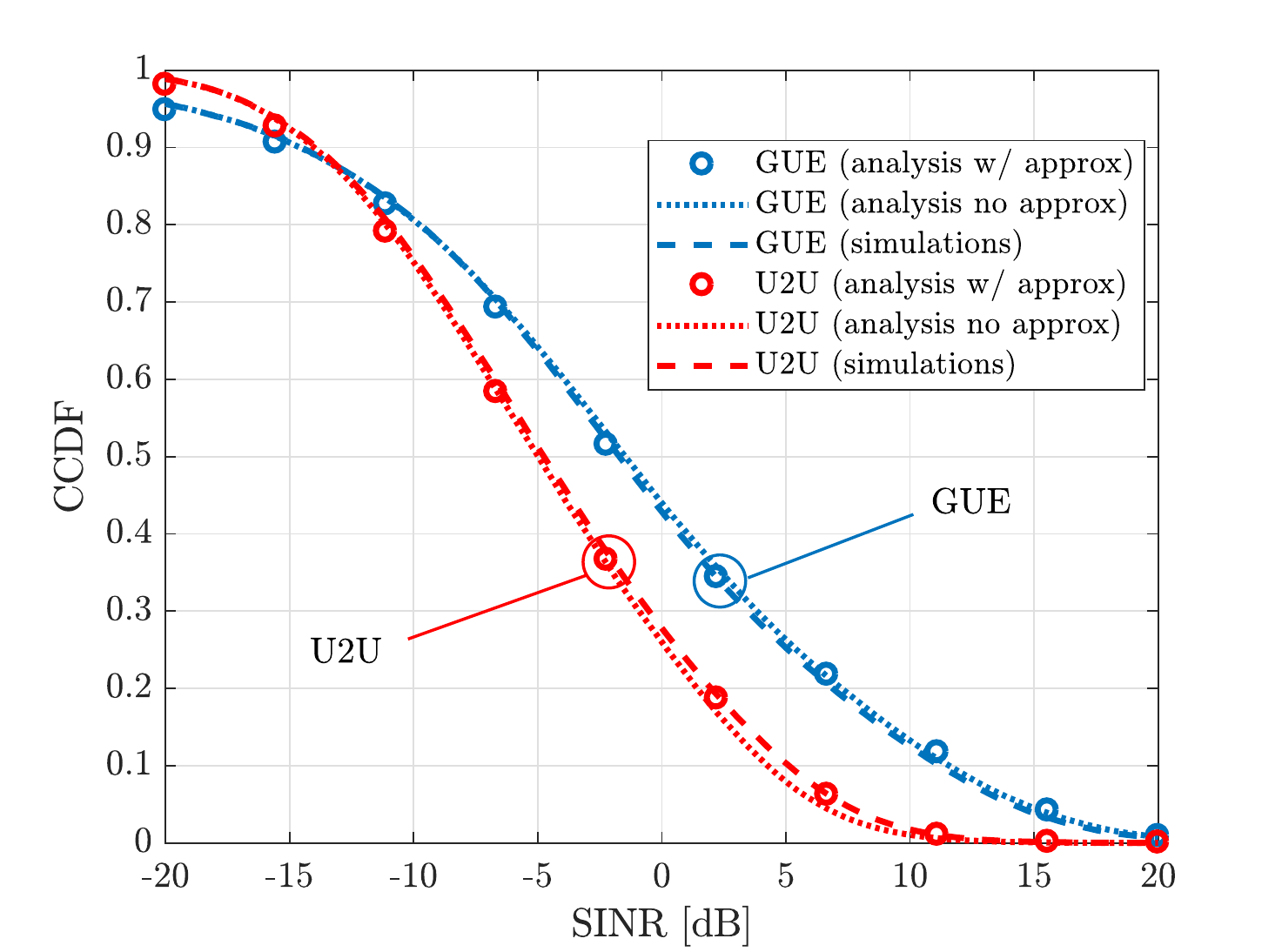}
	\caption{Underlay coverage probability obtained via approximated analysis (solid), exact analysis (dotted), and simulations (dashed).}
	\label{Validation}
\end{figure}

For the purpose of validation, Fig.~\ref{Validation} shows the coverage probability for GUE UL and U2U links in the underlay, with $\eta_\u=1$, obtained in three different ways: i) with our approximated analysis in Section~\ref{sec:analysis_underlay}, ii) through an exact analysis---omitted due to space constraints and available in \cite{azari2019uav}---, and iii) via simulations. The three sets of curves exhibit a close match, thus validating our analysis for the underlay and, as a special case, for the overlay too.

\begin{figure}[t!]
	\centering
	\includegraphics[width=\figwidth]{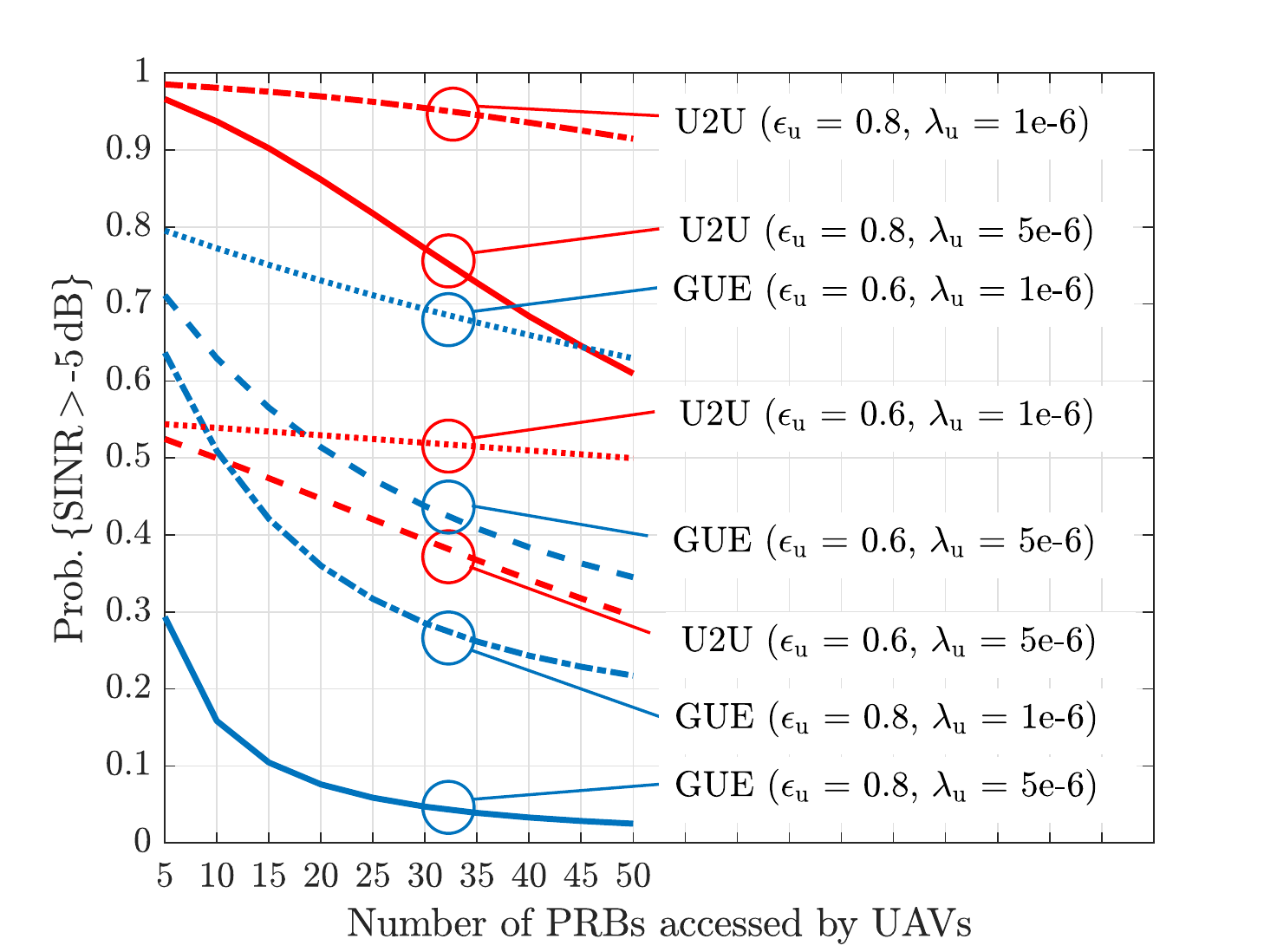}
	\caption{Coverage probability for U2U and GUE UL links in the underlay for increasing $\eta_\u$---i.e., increasing number of PRBs accessed by UAVs---, and various values of $\epsilon_\u$ and $\lambda_\u$.}
	\label{Pcov_nPRBu_underlay_combined}
\end{figure}

Fig.~\ref{Pcov_nPRBu_underlay_combined} shows the probability of experiencing SINRs per PRB larger than -5~dB for the GUE UL and U2U links in the underlay strategy as a function of the number of PRBs accessed by the UAVs. We consider various values for the UAV fractional power control factor $\eu$ and for the UAV density $\lambda_{\mathrm{u}}$. Notably, the results of Fig.~\ref{Pcov_nPRBu_underlay_combined} demonstrate how increasing the number of PRBs allocated to UAV pairs causes a sharp performance degradation for GUEs, except for the case where both the UAV density and the UAV transmit powers are constrained ($\lambda_{\mathrm{u}}=1$e-6, $\epsilon_{\mathrm{u}}=0.6$). As expected, also increasing the UAV density or transmit power generates more interference to the GUE UL, reducing the SINR. As for the U2U link performance vs. the number of PRBs accessed, this remains almost constant for $\lambda_{\mathrm{u}}=1$e-6, when UAV-to-UAV interference is negligible, whereas it decreases for $\lambda_{\mathrm{u}}=5$e-6, when UAV-to-UAV interference is more pronounced.

\begin{figure}
	\centering
	\includegraphics[width=\figwidth]{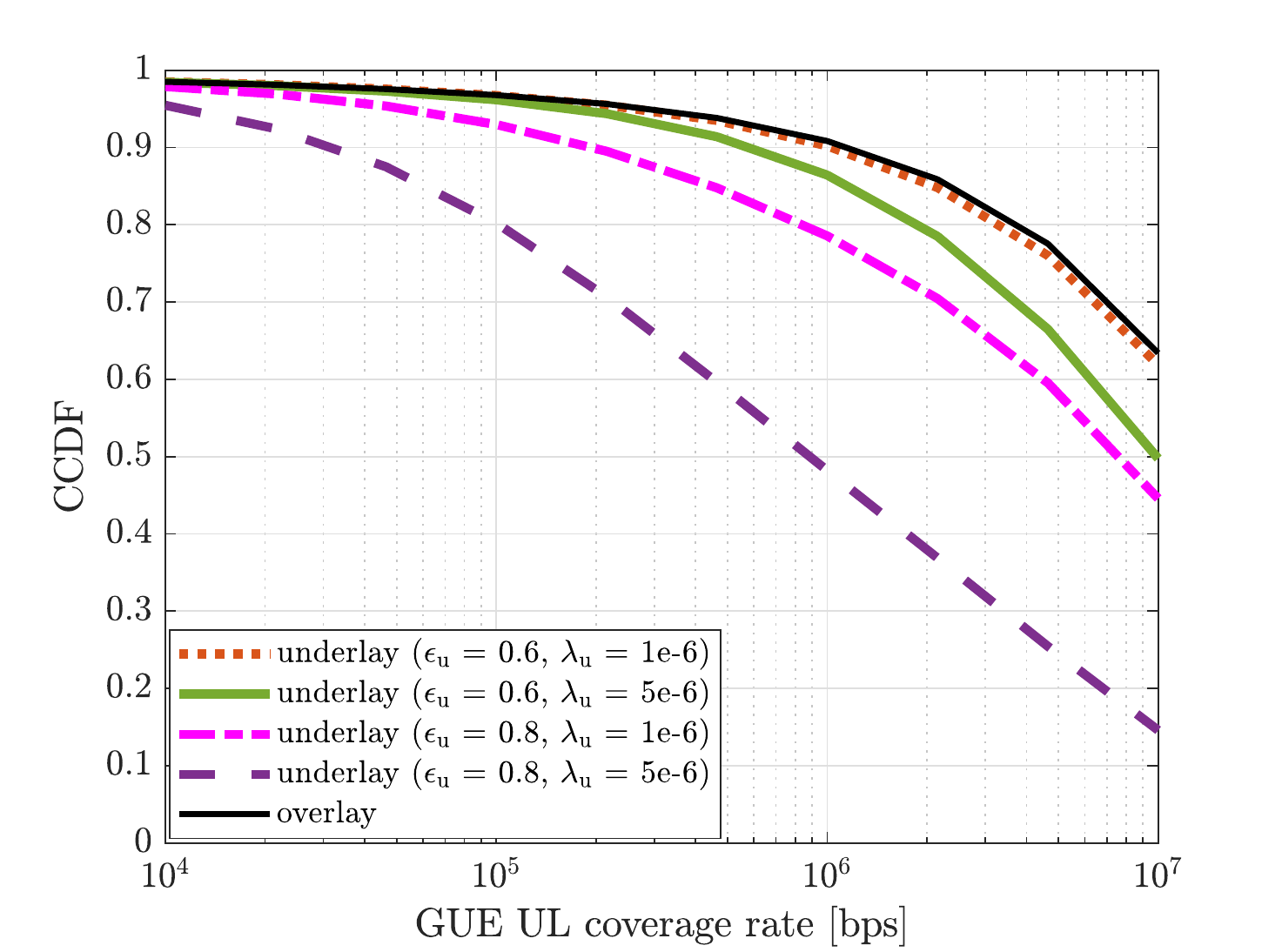}
	\caption{Coverage rate for GUE UL with underlay and overlay, for various values of $\epsilon_{\mathrm{u}}$ and $\lambda_{\mathrm{u}}$, with UAVs accessing five PRBs ($\eta_\u = 0.1$).}
	\label{CovRateGUE_combined}
\end{figure}

\begin{figure}
	\centering
	\includegraphics[width=\figwidth]{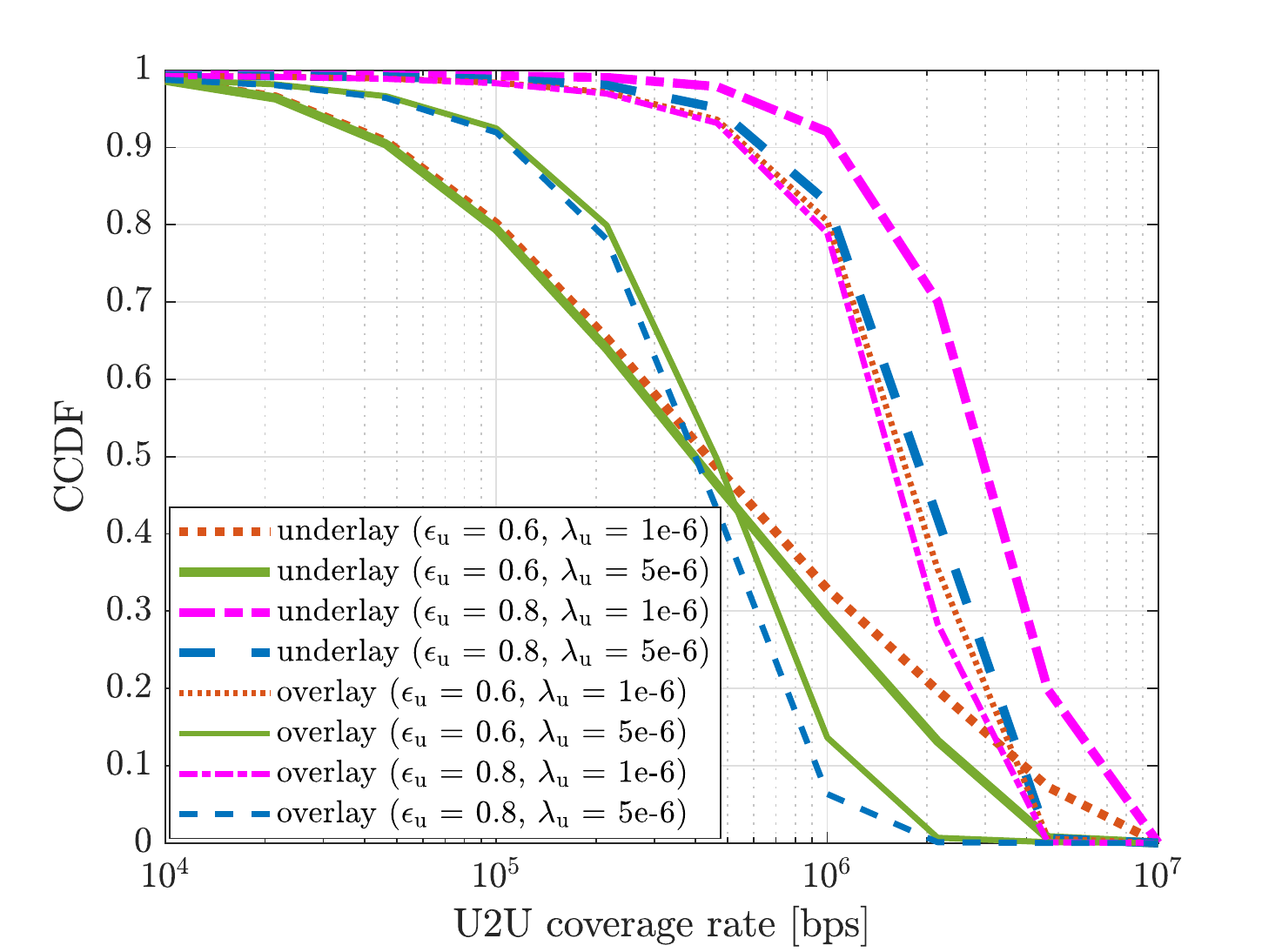}
	\caption{Coverage rate for U2U links with underlay and overlay, for various values of $\epsilon_{\mathrm{u}}$ and $\lambda_{\mathrm{u}}$, with UAVs accessing five PRBs ($\eta_\u = 0.1$).}
	\label{CovRateU2U_combined}
\end{figure}

Fig.~\ref{CovRateGUE_combined} and Fig.~\ref{CovRateU2U_combined} show the CCDF of the coverage rate for GUE UL and U2U links, respectively, when UAVs access five PRBs, in the underlay or in the overlay. 

Fig.~\ref{CovRateGUE_combined} demonstrates that in order to maintain a high GUE UL rate, one should i) adopt an overlay spectrum sharing approach, or ii) limit the power employed by the UAVs in the underlay, i.e., set $\epsilon_{\mathrm{u}}=0.6$. From a GUE perspective, the overlay approach seems the preferred alternative, since the effectiveness of applying a more conservative UAV power control policy diminishes when the UAV density grows (i.e., when $\lambda_{\mathrm{u}}=5$e-6).

Fig.~\ref{CovRateU2U_combined} provides the following insights:
\begin{itemize}[leftmargin=*]
\item In the overlay, the U2U coverage rate is only affected by UAV-to-UAV interference. Higher UAV densities thus have a more noticeable impact on the coverage rates than the UL power control strategy does. This can be observed by comparing scenarios with $\lambda_{\mathrm{u}}=1$e-6 (dotted red and dash-dotted purple curves) to scenarios with $\lambda_{\mathrm{u}}=5$e-6 (resp. solid green and dashed blue).
\item In the underlay, the U2U coverage rate is mostly affected by GUE-generated interference. Indeed, the rate degradation caused by increasing $\lambda_{\mathrm{u}}$ from 1e-6 to 5e-6 is limited when $\epsilon_{\mathrm{u}}=0.8$ (thick dash-dotted purple vs. dashed blue curves) and almost negligible when $\epsilon_{\mathrm{u}}=0.6$ (thick dotted red vs. solid green curves).
\item Comparing underlay vs. overlay, a crossover can be observed between green solid lines ($\epsilon_{\mathrm{u}}=0.6$, $\lambda_{\mathrm{u}}=5$e-6). This can be explained as follows. The upper part of the underlay CCDF corresponds to the worst U2U links---severely interfered by GUEs---which are better off in the overlay, where such interference is not present. The lower part of the underlay CCDF corresponds to the best U2U links---those not severely interfered by GUEs, for which UAV-to-UAV interference is dominant---that are worse off in the overlay, where all UAV interferers are concentrated on each PRB.
\item Comparing Fig.~\ref{CovRateGUE_combined} and Fig.~\ref{CovRateU2U_combined}, we observe that the overlay spectrum sharing approach is capable of i) offering the best guaranteed GUE UL performance, and ii) generally allowing a larger number of UAVs to achieve rates of 100 kbps---a requirement set by the 3GPP for command and control information exchange \cite{3GPP36777}.
\end{itemize}
\section{Conclusions}
\label{sec:conclusion}

In this article, we accurately evaluated the performance of an UL cellular network with both overlayed and underlayed U2U communications. We found that in the underlay, the U2U rate degradation caused by increasing the UAV density is limited, since the interference on U2U links is dominated by GUE transmissions. This turned out not to be true for the overlay, where higher UAV densities result in lower U2U rates, owing to all UAVs sharing the same resources without frequency hopping. Our results showed that, in an urban scenario, overlaying U2U and GUE UL communications is the preferable alternative  for simultaneously i) maximizing the GUE UL performance, and ii) meeting a minimum U2U coverage rate requirement of 100~kbps.

\ifCLASSOPTIONcaptionsoff
  \newpage
\fi
\bibliographystyle{IEEEtran}
\bibliography{Strings_Gio,Bib_Gio,Bib_Mahdi}
\end{document}